\newtheorem{theorem}{\bf Theorem}
\def\BibTeX{{\rm B\kern-.05em{\sc i\kern-.025em b}\kern-.08em
    T\kern-.1667em\lower.7ex\hbox{E}\kern-.125emX}}
\begin{document}

\title{Coverage Probability Analysis of RIS-Assisted High-Speed Train Communications}
\DeclareRobustCommand*{\IEEEauthorrefmark}[1]{%
  \raisebox{0pt}[0pt][0pt]{\textsuperscript{\footnotesize\ensuremath{#1}}}}
    \author{\IEEEauthorblockN{Changzhu Liu\IEEEauthorrefmark{1}, Ruisi He\IEEEauthorrefmark{1*}, Yong Niu\IEEEauthorrefmark{1*}, Bo Ai\IEEEauthorrefmark{1,2}, Zhu Han\IEEEauthorrefmark{3}, Zhangfeng Ma\IEEEauthorrefmark{1}, \\ Meilin Gao\IEEEauthorrefmark{4}, Zhangdui Zhong\IEEEauthorrefmark{1}, Ning Wang\IEEEauthorrefmark{5}
}

\IEEEauthorblockA{\IEEEauthorrefmark{1}State Key Laboratory of Rail Traffic Control and Safety, Beijing Jiaotong University, Beijing, China}
\IEEEauthorblockA{\IEEEauthorrefmark{2}Research Center of Networks and Communications, Peng Cheng Laboratory, Shenzhen, China}
\IEEEauthorblockA{\IEEEauthorrefmark{3}Department of Electrical and Computer Engineering, University of Houston, Houston, TX 77004 USA}
\IEEEauthorblockA{\IEEEauthorrefmark{4}Beijing National Research Center for Information Science and Technology, Tsinghua University, China}
\IEEEauthorblockA{\IEEEauthorrefmark{5}School of Information Engineering, Zhengzhou
University, Zhengzhou 450001, China}


*Corresponding Author: Ruisi He, Yong Niu (ruisi.he@bjtu.edu.cn;niuyong@bjtu.edu.cn)

}



\maketitle

\begin{abstract}
Reconfigurable intelligent surface (RIS) has received increasing attention due to its capability of extending cell coverage by reflecting signals toward receivers. This paper considers a RIS-assisted high-speed train (HST) communication system to improve the coverage probability. We derive the closed-form expression of coverage probability. Moreover, we analyze impacts of some key system parameters, including transmission power, signal-to-noise ratio threshold, and horizontal distance between base station and RIS. Simulation results verify the efficiency of RIS-assisted HST communications in terms of coverage probability.

\end{abstract}

\begin{IEEEkeywords}
  Reconfigurable intelligent surface, high-speed train, coverage probability.
\end{IEEEkeywords}

\section{Introduction}
High-speed train (HST) communications have attracted a lot of attention in recent years. Compared with the traditional wireless communication networks, HST communication has high mobility of onboard transceivers and large signal penetration loss through train cars, and it leads to many challenges such as channel modeling, coverage enhancement, Doppler shift compensation, time-varying channel estimation, beamforming design, and resource management\cite{t1,He,Huang1,Huang2}. Reconfigurable intelligent surface (RIS) is one of key technologies for future wireless communication, and it is capable of smartly designing radio environment \cite{t2,sun}. Moreover, the wireless coverage can be expected to increase with the aid of RIS \cite{t3}, and it is thus helpful for HST communication enhancement.

In the design of HST communication system, cell coverage performance is an important indicator. However, in terms of coverage performance analysis and improvement of HST communications, there are only few works in the literature. In \cite{t6}, a space-ground integrated cloud railway network was proposed in order to achieve seamless coverage for environment-diverse HST, which can reduce handover times and extend coverage of railway. In \cite{t7} and \cite{t8}, a beamforming based coverage performance improvement scheme was proposed for HST communication systems with elliptical cells. In \cite{t9} and \cite{t10}, the authors considered overlap area between adjacent cells and hard handoff scheme, and coverage performance for HST system was further analyzed. In \cite{t11}, coverage efficiency of HST communications was improved by exploiting the radio-over-fiber technology. In \cite{t12}, two different free-space-optics coverage models was proposed in order reduce the impact of handoff processes. In \cite{t13} and \cite{t14}, the authors considered unmanned aerial vehicles assisted HST communications for coverage improvement. The coverage analysis of HST communication system with carrier aggregation was investigated in \cite{t15}, where theoretical expressions for edge coverage probability and percentage of cell coverage area were derived.

RIS-assisted HST communications have been rarely investigated. In \cite{t1}, the authors provided a novel RIS-aided HST wireless communication paradigm, including its main challenges and application scenarios, and provided effective solution to solved the problem of signal processing and resource management. In \cite{t16}, outage probability of a RIS-assisted multiple-input-multiple output (MIMO) downlink system with statistic channel state information (CSI) for HST communications was investigated. In \cite{t17}, the authors considered a train-ground time division duplexing wireless mobile communication paradigm to deploy two RISs for HST communication system, and further solved the spectrum effective maximization problem. In \cite{t18}, the authors investigated spectral efficiency of a RIS-assisted millimeter-wave HST communication by exploiting the deep reinforcement learning method. The RIS-assisted free-space-optics communications for HST access connectivity were investigated in \cite{t19}, and the average signal-to-noise ratio (SNR) and the outage probability were analyzed. In \cite{tt19}, the authors investigated the interference suppression for an RIS-assisted railway wireless communication system to maximize signal-to-interference-plus-noise ratio.

To the best of the authors' knowledge, there is few literature dealing with wireless coverage probability analysis for RIS-assisted HST communications. Motivated by the above gap, this paper investigates coverage performance in downlink single-input-single-output (SISO) RIS-assisted HST communications, as shown in F{}ig.~\ref{fig:system}, where a single-antenna base station (BS) serves a single-antenna mobile relay (MR) with the help of one RIS. We consider the practical case where RIS only has a finite number of discrete phases. Considering the complexity and validity, we exploit the local search method to optimize the RIS phase. Moreover, we analyze the impact of critical system parameters, including transmission power, SNR threshold, and horizontal distance between BS and RIS, on coverage probability.The major contributions of our work can be summarized as follows: 

\begin{itemize}
\item We consider a RIS-assisted SISO downlink HST communication model to extended the coverage of HST, where RIS is deployed within the coverage of BS to provide reflective paths to enhance the received power at MR. The RISs communicate with BS and the MR mounted on top of the train.
\item The closed-form expression of coverage probability is derived for RIS-assisted HST communications. Then, RIS phase is optimized by exploiting local search method.
\item The simulation demonstrates the impact of some key system parameters on coverage probability including transmission power, SNR threshold, and horizontal distance between BS and RIS. The simulation results for the traditional HST communication systems without using RIS is provided for the sake of comparison.
The results demonstrate that the deployment of RISs can effectively improve coverage probability and enhance coverage of HST communication systems.
\end{itemize}

The rest of this paper is organized as follows. The system model is introduced in Section II. In Section III, the coverage probability is derived and RIS phases are optimized. In Section IV, numerical results are presented to show the impact of key system parameters on coverage probability. Section V concludes this work.

\section{System model}
\subsection{Scenario Description}
In this section, we introduce a RIS-assisted HST communication system model, as shown in F{}ig.~\ref{fig:system}, where a single-antenna BS serves a single-antenna MR with the help of RIS. The trackside BS transmits signals to a train through an MR mounted on top of the train avoiding penetration loss. RIS with $N$ reflecting elements is deployed along the rail track. We assume that the BS and MR are in far field of the RIS and each element is capable of independently rescattering signal, which can be dynamically adjusted by RIS controller\cite{t20}, and assume that CSI of all channels is perfectly known \cite{t21}.

We consider total time $T$ slots, $\tau$ is the slot duration, and $\mathcal{N} = \left\{1,\cdots, N \right\}$. $x(t)\sim \mathcal{C} \mathcal{N} (0,1)$ denotes the signal transmitted to MR during time slot $t \in \left\{1,\cdots, T \right\}$ with zero mean and variance equal $1$. The received signal at MR in time slot $t$ can given as
\begin{align}
  y\left( t \right) =\sqrt{P}\left( \underbrace{ h_{\rm{d}} \left( t \right) + \sum_{n=1}^N{h^{n}_{{\rm{r}} }\left( t \right) e^{j\theta _n\left( t \right)}g_{n}}\left( t \right)}_{h\left( t \right)}   \right) x\left( t \right)+ z\left( t \right),
\end{align}
where $P$ denotes transmission power, $h\left( t \right)$ denotes the equivalent channel between BS and MR. The channels in time slot $t$ from BS to MR, from BS to the $n$-th RIS, and from the $n$-th RIS to MR, are denoted by $h^{n}_{{\rm{d}}}\left( t \right)\in \mathbb{C}$, $g_{n}\left( t \right) \in \mathbb{C}$ and $h_{\rm{r}}^{n}\left( t \right) \in \mathbb{C}$, respectively, $z\left( t \right) \sim \mathcal{C} \mathcal{N} (0,\sigma ^2)$ denotes i.i.d. additive white Gaussian noise at MR. $\theta _n\left( t \right)$ represents phase of the $n$-th RIS element. For ease of actual implementation, we consider that the phase at each element of RIS can only take a finite discrete values with equal quantization intervals $\left[0, 2\pi \right)$. Let $b$ denote the number of the quantization bits. Then the set of phases at each element is given by $ \varTheta  =\left\{ 0, \Delta\theta ,\cdots ,\Delta\theta  \left( M-1 \right) \right\} $, where $ \Delta\theta  =\frac{2\pi}{M}$ and $M=2^b$. Note that, the transceiver distances (BS-MR and RIS-MR links) always change across time slots due to high mobility of HST. 

Accordingly, the SNR at MR in time slot $t$ is given by
\begin{equation}
  \gamma \left( t \right) = \frac{P\left|h_{\rm{d}} \left( t \right) + \sum_{n=1}^N{h^{r}_{{\rm{n}} }\left( t \right) e^{j\theta _n\left( t \right)}g_{n}}\left( t \right)  \right|^2 }{\sigma^2},
\end{equation}
where $\left| \cdot \right|$ is the absolute value of a complex number.

\begin{figure}[!t]
  \centering  
    \includegraphics[scale=0.075]{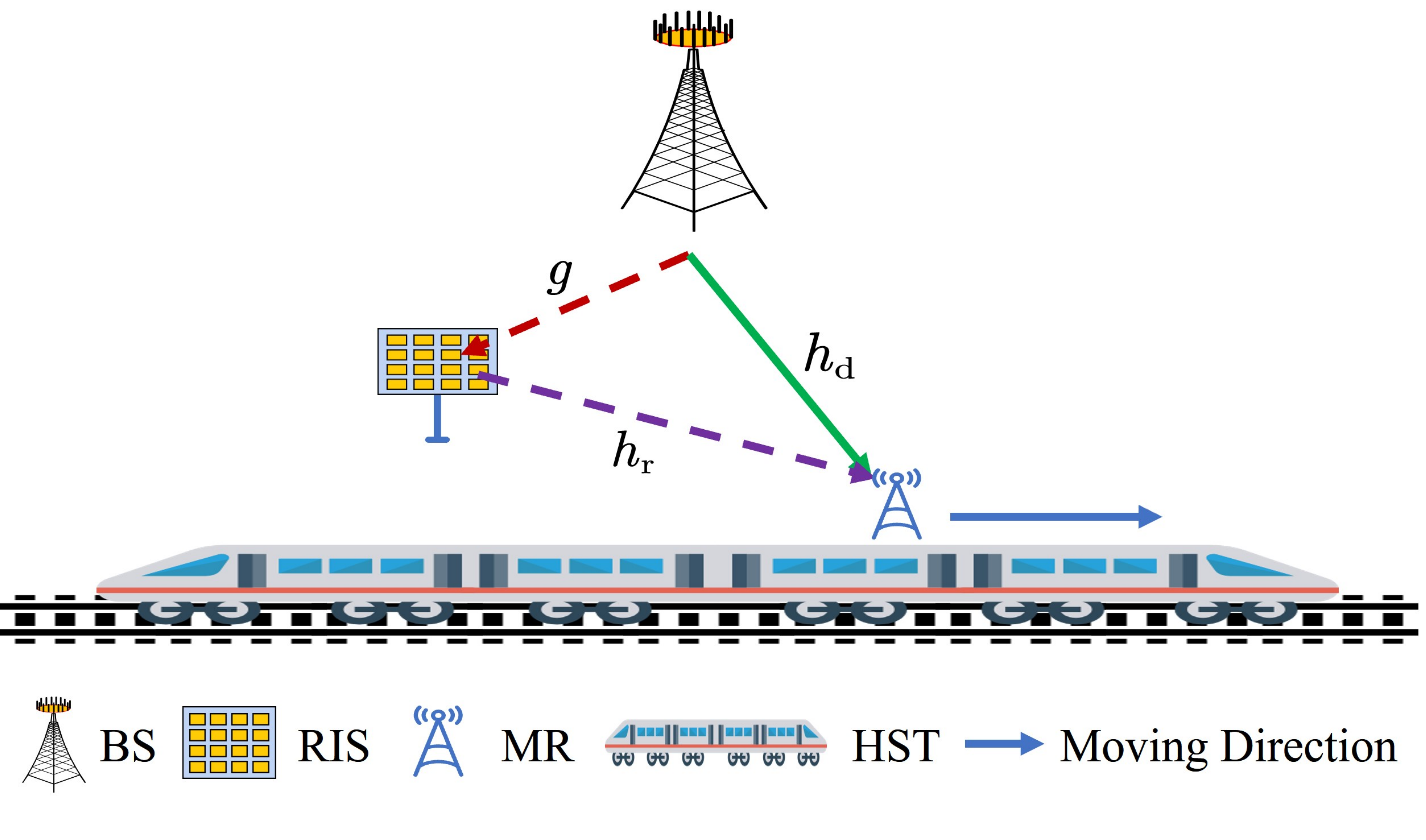}\\
    \caption{\label{fig:system} A RIS-assisted SISO downlink system for HST communications.}
\end{figure}

\subsection{Channel Model}
We assume that all the links follow the Rician fading since line-of-sight (LoS) and non-line-of-sight (NLoS) components exist \cite{t22}. In time slot $t$, BS-RIS link $h_{\rm{d}} \left( t \right)$ can be expressed as 
\begin{equation} \label{eq:hbm}
  h_{\rm{d}} \left( t \right)=\sqrt{\frac{\kappa _{\rm{d}}}{\kappa _{ \rm{d} }+1}}\bar{h}_{ \rm{d}}\left( t \right)+\sqrt{\frac{1}{\kappa _{ \rm{d} }+1}}\tilde{h}_{ \rm{d}}\left( t \right),
\end{equation}
where $\kappa_{\rm{d}} \geq 0 $ is the Rician K-factor,  $\bar{h}_{\rm{d}}\left( t \right)\in \mathbb{C} $ is the LoS component depending on BS-MR link and remains stable with each time slot\footnote{We ignore shadowing effects and assume that large-scale component is determined only by distance-based path-loss \cite{t23}.}, $\tilde{h}_{ \rm{d}}\left( t \right) \in \mathbb{C} $ is the NLoS component. The LoS component of the channel between BS and  MR can be given as \cite{t24}  $\bar{h}_{\rm{d}}\left( t \right)=\sqrt{D_{\rm{d}}^{-\varepsilon_{\rm{d}}}\left( t \right)}e^{-j\theta_{\rm{d}}\left( t \right)}$, where $D_{\rm{d}}\left( t \right)$ denotes distance between BS and MR, as shown in F{}ig.~\ref{fig:top}, $\varepsilon_{\rm{d}}\left( t \right)$ denotes path-loss parameter, and $\theta_{\rm{d}}\left( t \right)$ denote phase. Similarly, the NLoS component can be written as $\tilde{h}_{ \rm{d}}\left( t \right) = \sqrt{d_{\rm{NLoS,d}}^{-\varepsilon'_{\rm{d}}}\left( t \right)}\tilde{h}_{\rm{NLoS}}^{\rm{d}}\left( t \right)$, where $d_{\rm{NLoS,d}}\left( t \right)$ is the distance between BS and MR for the NLoS case, $\varepsilon'_{\rm{d}}$ is the path-loss parameter, $\tilde{h}_{\rm{NLoS}}^{\rm{d}}\left( t \right)\sim \mathcal{C} \mathcal{N} (0,1)$ denotes small-scale channel component. 

\begin{figure}[!t]
  \centering  
  \includegraphics[scale=0.16]{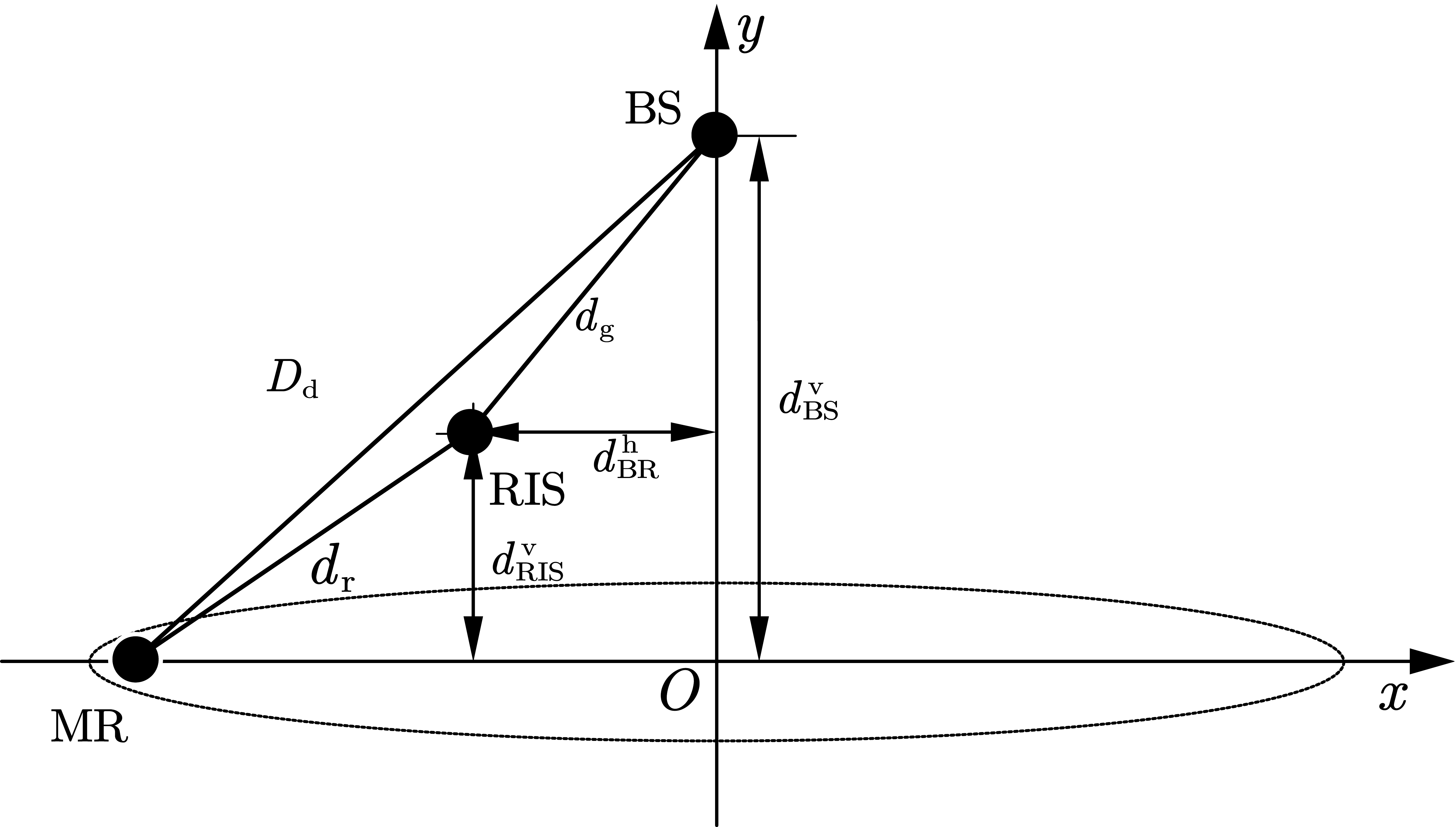}
  \caption{\label{fig:top}Top view of RIS-assisted HST communication system model.}
\end{figure}

For all $n \in \mathcal{N}$, in time slot $t$, $g_{n}\left( t \right)$ and $h_{\rm{r}}^{n}\left( t \right)$ can be given as
\begin{equation} \label{eq:hbr}
  g_{n} \left( t \right)=\sqrt{\frac{\kappa _{\rm{g}}}{\kappa _{ \rm{g}}+1}}\bar{h}_{\rm{g}}^{n}+\sqrt{\frac{1}{\kappa _{ \rm{g}}+1}}\tilde{h}_{\rm{g}}^{n}\left( t \right),
\end{equation}
\begin{equation} \label{eq:hrm}
  h_{\rm{r}}^{n} \left( t \right)=\sqrt{\frac{\kappa _{\rm{r}}}{\kappa _{ \rm{r}}+1}}\bar{h}_{ \rm{r}}^{n}\left( t \right)+\sqrt{\frac{1}{\kappa _{ \rm{r} }+1}}\tilde{h}_{ \rm{r}}^{n}\left( t \right),
\end{equation}
where $\kappa_{\rm{g}} \geq 0 $ and $\kappa_{\rm{r}} \geq 0 $  are the Rician K-factors, $\bar{h}_{ \rm{g}}^{n}\in \mathbb{C}$ and  $ \bar{h}_{\rm{r}}^{n}\left( t \right) \in \mathbb{C}$ denote LoS components, and $\tilde{h}_{ \rm{g}}^{n}\left( t \right)\in \mathbb{C}$ and $\tilde{h}_{\rm{r}}^{n}\left( t \right) \in \mathbb{C}$ denote NLoS components. Similar to BS-MR link, we have $\bar{h}_{ \rm{g}}^{n} =  \sqrt{\left(d_{\rm{g}}^{n}\right )^{-\varepsilon_{\rm{g}}}}e^{-j\theta_{\rm{g}}^{n}}$, $\bar{h}_{\rm{r}}^{n}\left( t \right) =  \sqrt{\left(d_{\rm{r}}^{n} \left( t \right)\right )^{-\varepsilon_{\rm{r}}}}e^{-j\theta_{\rm{r}}^{n}\left( t \right)}$, where $d_{\rm{g}}^{n}$ and $d_{\rm{r}}^{n}\left( t \right)$ denote distance between BS and the $n$-th RIS and between the $n$-th RIS element and MR, as shown in F{}ig.~\ref{fig:top}, respectively. Moreover, $d_{\rm{BS}}^{\rm{v}}$ denotes vertical distance from BS to rail track, $d_{\rm{RIS}}^{\rm{v}}$ denotes vertical  distance from RIS to rail track, and $d_{\rm{BR}}^{\rm{h}}$ denotes horizontal distance between BS and RIS. Parameters $\varepsilon_{\rm{g}}$ and $\varepsilon_{\rm{r}}$ are path-loss parameters, and $\theta_{\rm{g}}^{n}$ and $\theta_{\rm{r}}^{n}\left( t \right)$ are phase. Similarly, the NLoS component can be written as  $\tilde{h}_{ \rm{g}}^{n}\left( t \right) = \sqrt{\left(d_{\rm{NLoS,g}}^{n}\right) ^{-\varepsilon'_{\rm{g}}}}\tilde{h}_{\rm{NLoS,g}}^{n}\left( t \right)$, $\tilde{h}_{ \rm{r}}^{n}\left( t \right) = \sqrt{\left(d_{\rm{NLoS,r}}^{n}\left( t \right)\right)^{-\varepsilon'_{\rm{r}}}}\tilde{h}_{\rm{NLoS,r}}^{n}\left( t \right)$, where $d_{\rm{NLoS,g}}^{n}$, $d_{\rm{NLoS,r}}^{n}\left( t \right)$ denote distance between BS and the $n$-th RIS and between the $n$-th RIS and  MR in the NLoS case, respectively. Parameters $\varepsilon'_{\rm{g}}$, $\varepsilon'_{\rm{r}}$ are the path-loss parameters in the NLoS case, and $\tilde{h}_{\rm{NLoS,g}}^{n}\left( t \right), \tilde{h}_{\rm{NLoS,r}}^{n}\left( t \right)\sim \mathcal{C} \mathcal{N} (0,1)$ denote the small-scale components. 

\section{Coverage Probability Analysis and RIS Phase Optimization}
In this section, we derive the expression of coverage probability for RIS-assisted HST communications. Then, we proposed an local search method to optimize the RIS phase.

\subsection{Coverage Probability}
Coverage probability is defined as the probability that the effectively received SNR $\gamma \left( t \right)$ at MR is larger than a given SNR threshold $\gamma_{th}$, which can be given by
\begin{equation}\label{eq:pcov}
  P_{\rm{cov}}\left( t \right)=\mathrm{Pr}\left( \gamma \left( t \right) \geqslant \gamma _{\rm{th}} \right) = 1 - \mathrm{Pr}\left( \gamma \left( t \right) < \gamma _{\rm{th}} \right).
\end{equation}

Let $P_{\rm{out}}\left( t \right)$ denote $\mathrm{Pr}\left( \gamma \left( t \right) < \gamma _{\rm{th}} \right)$, and it is given by 
\begin{equation}
\begin{array}{l}
  P_{\rm{out}}\left( t \right)\\
   = \mathrm{Pr}\left( \frac{P\left|h_{\rm{d}} \left( t \right) + \sum_{n=1}^N{h^{r}_{{\rm{n}} }\left( t \right) e^{j\theta _n\left( t \right)}g_{n}}\left( t \right)  \right|^2 }{\sigma^2}  < \gamma _{\rm{th}}    \right) \\
   =\mathrm{Pr} \left( \left| h_{\rm{d}} \left( t \right) + \sum_{n=1}^N{h^{n}_{{\rm{r}} }\left( t \right) e^{j\theta _n\left( t \right)}h^{n}_{{\rm{g}} }}\left( t \right) \right|^2<\frac{\gamma _{\rm{th}}}{\bar{\gamma}}  \right)\\
   =\mathrm{Pr}\left( \left| h\left( t \right) \right|^2<\frac{\gamma _{\rm{th}}}{\bar{\gamma}} \right).
\end{array}
\end{equation}
where $\bar{\gamma}  = \frac{P}{\sigma^2}$ denotes the average transmission SNR.

The type of probability distribution of  $h\left( t \right)$ is given by Theorem~\ref{theorem1}.
\begin{theorem}
  \label{theorem1}
  The equivalent channel  $h\left(t\right)$ between BS and MR, follows complex-valued Gaussian distribution with mean $\mu_h\left(t\right)$ and  variance $\sigma_{h}^2\left(t\right)$, namely, $h\left(t\right) \sim \mathcal{C} \mathcal{N} \left( \mu_h\left(t\right),\sigma_{h}^2\left(t\right) \right)$, and we have 
\begin{align}  
  \mu_h\left(t\right) &=\rho _{\rm{d} }\sqrt{D_{\rm{d}}^{-\varepsilon_{\rm{d}}}\left( t \right)}e^{-j\theta _{ \rm{d}}\left(t\right)} \\ \nonumber
  &+\sum_{n=1}^N\rho_{\rm{r}}\rho _{\rm{g} }\sqrt{\left(d_{\rm{r}}^{n} \left( t \right)\right )^{-\varepsilon_{\rm{r}}}}\sqrt{\left(d_{\rm{g}}^{n}\right )^{-\varepsilon_{\rm{g}}}} \\ \nonumber
  &\times e^{j\left( \theta _n\left(t\right)-\theta_{\rm{r}}^{n}\left(t\right)-\theta_{\rm{g}}^{n} \right)},\nonumber
\end{align}
\begin{align}
\sigma _{h}^{2}\left(t\right)&=\varrho_{\rm{d}}^2d_{\rm{NLoS,d}}^{-\varepsilon'_{\rm{d}}}\left( t \right) \\ \nonumber
&+\sum_{n=1}^{N}\varrho_{\rm{r}}^2\varrho_{\rm{g}}^2\left(d_{\rm{NLoS,r}}^{n}\left( t \right)\right)^{-\varepsilon'_{\rm{r}}}\left(d_{\rm{NLoS,g}}^{n}\right)^{-\varepsilon'_{\rm{g}}}, \nonumber
\end{align}
where $\rho_{\rm{d}}=\sqrt{\frac{\kappa _{{\rm{d}}}}{\kappa _{{\rm{d}}}+1}}$, $\varrho _{{\rm{d}}}=\sqrt{\frac{1}{\kappa _{\rm{d}}+1}}$, $\rho _{\rm{g}}=\sqrt{\frac{\kappa _{{\rm{g}}}}{\kappa _{{\rm{g}}}+1}}$, $\varrho _{{\rm{g}}}=\sqrt{\frac{1}{\kappa _{{\rm{g}}}+1}}$, $\rho _{\rm{r}}=\sqrt{\frac{\kappa _{\rm{r}}}{\kappa _{\rm{r}}+1}}$, and $\varrho _{{\rm{r}}}=\sqrt{\frac{1}{\kappa _{\rm{r}}+1}}$.
\end{theorem}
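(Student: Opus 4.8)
The plan is to write $h(t)=h_{\rm{d}}(t)+\sum_{n=1}^N h_{\rm{r}}^{n}(t)e^{j\theta_n(t)}g_n(t)$ and handle the direct term exactly, the means by linearity, and the aggregate distribution by a Gaussian approximation. First I would dispose of the direct link: since $\bar{h}_{\rm{d}}(t)$ is deterministic and $\tilde{h}_{\rm{d}}(t)=\sqrt{d_{\rm{NLoS,d}}^{-\varepsilon'_{\rm{d}}}(t)}\,\tilde{h}_{\rm{NLoS}}^{\rm{d}}(t)$ with $\tilde{h}_{\rm{NLoS}}^{\rm{d}}(t)\sim\mathcal{CN}(0,1)$, the decomposition in (\ref{eq:hbm}) shows that $h_{\rm{d}}(t)\sim\mathcal{CN}\!\left(\rho_{\rm{d}}\bar{h}_{\rm{d}}(t),\varrho_{\rm{d}}^2 d_{\rm{NLoS,d}}^{-\varepsilon'_{\rm{d}}}(t)\right)$ exactly. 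After substituting $\bar{h}_{\rm{d}}(t)=\sqrt{D_{\rm{d}}^{-\varepsilon_{\rm{d}}}(t)}e^{-j\theta_{\rm{d}}(t)}$ this already supplies the first summand of both $\mu_h(t)$ and $\sigma_h^2(t)$.

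Next I would compute the mean of each cascaded term. Expanding the product of the two Rician channels (\ref{eq:hbr})--(\ref{eq:hrm}) into four summands and taking expectations, the independence and zero mean of the NLoS factors $\tilde{h}_{\rm{g}}^{n}$ and $\tilde{h}_{\rm{r}}^{n}$ annihilate the three summands containing an NLoS factor, leaving only $\rho_{\rm{r}}\rho_{\rm{g}}\,e^{j\theta_n(t)}\bar{h}_{\rm{r}}^{n}(t)\bar{h}_{\rm{g}}^{n}$. Substituting $\bar{h}_{\rm{r}}^{n}(t)=\sqrt{(d_{\rm{r}}^{n}(t))^{-\varepsilon_{\rm{r}}}}e^{-j\theta_{\rm{r}}^{n}(t)}$ and $\bar{h}_{\rm{g}}^{n}=\sqrt{(d_{\rm{g}}^{n})^{-\varepsilon_{\rm{g}}}}e^{-j\theta_{\rm{g}}^{n}}$ collapses the three phases into $e^{j(\theta_n(t)-\theta_{\rm{r}}^{n}(t)-\theta_{\rm{g}}^{n})}$, and summing over $n$ reproduces the stated $\mu_h(t)$ by linearity of expectation.

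The distribution and variance step is where the real work lies, and it is the step I expect to be the main obstacle. Each cascaded term is a product of two complex Gaussians whose $\tilde{h}_{\rm{r}}^{n}\tilde{h}_{\rm{g}}^{n}$ component is not itself Gaussian, so $h(t)$ is not \emph{exactly} complex Gaussian and a Gaussian approximation must be invoked; I would justify it through the central limit theorem across the $N$ mutually independent RIS elements (the regime of interest having $N$ large), so that the aggregate $\sum_n h_{\rm{r}}^{n}e^{j\theta_n}g_n$ is approximated by a complex Gaussian whose parameters are the sums of the per-element means and variances. The per-element variance follows from $\mathrm{Var}(h_{\rm{r}}^{n}g_n)=\mathbb{E}[|h_{\rm{r}}^{n}|^2]\,\mathbb{E}[|g_n|^2]-|\mathbb{E}[h_{\rm{r}}^{n}g_n]|^2$ using independence across the two hops, with $\mathbb{E}[|h_{\rm{r}}^{n}|^2]=\rho_{\rm{r}}^2(d_{\rm{r}}^{n})^{-\varepsilon_{\rm{r}}}+\varrho_{\rm{r}}^2(d_{\rm{NLoS,r}}^{n})^{-\varepsilon'_{\rm{r}}}$ and the analogous expression for $g_n$. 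Reconciling this with the claimed $\sigma_h^2(t)$ is the delicate point: the exact second moment leaves, besides the stated $\varrho_{\rm{r}}^2\varrho_{\rm{g}}^2$ product term, two LoS$\times$NLoS cross terms $\rho_{\rm{r}}^2\varrho_{\rm{g}}^2(d_{\rm{r}}^{n})^{-\varepsilon_{\rm{r}}}(d_{\rm{NLoS,g}}^{n})^{-\varepsilon'_{\rm{g}}}$ and $\varrho_{\rm{r}}^2\rho_{\rm{g}}^2(d_{\rm{NLoS,r}}^{n})^{-\varepsilon'_{\rm{r}}}(d_{\rm{g}}^{n})^{-\varepsilon_{\rm{g}}}$, so recovering exactly the stated variance requires neglecting these cross terms. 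I would therefore need to pin down the precise modeling assumption (e.g.\ dominant NLoS path loss, or the reflected LoS power being absorbed into the mean) that justifies dropping them as a dominant-term approximation.
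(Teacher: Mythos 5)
Your proposal follows the same route as the paper's Appendix A: the same substitution of \eqref{eq:hbm}, \eqref{eq:hbr}, \eqref{eq:hrm} into $h(t)$, the same observation that the LoS terms are deterministic for a given location, and the same mean computation by linearity (the zero-mean NLoS factors killing the three mixed summands). Your direct-link and mean steps match the paper exactly. The two places where you stalled, however, are not gaps in your understanding — they are genuine flaws in the paper's own proof, which your more careful treatment exposes.

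On Gaussianity: the paper simply asserts that all the random parts of \eqref{eq:h2}, including $\sum_{n=1}^N \varrho_{\rm{r}}\varrho_{\rm{g}}\tilde{h}^{n}_{\rm{r}}(t)e^{j\theta_n(t)}\tilde{h}^{n}_{\rm{g}}(t)$, ``also follow a Gaussian distribution.'' As you correctly note, a product of two independent complex Gaussians is not Gaussian, so the assertion is false as stated; your CLT-over-$N$-elements argument is the standard repair and is entirely absent from the paper. On the variance: the cross terms you identified are silently dropped by the paper. Its computation \eqref{eq:hsigama} passes from $\mathrm{var}\{h(t)\}$ to only two contributions, $\varrho_{\rm{d}}^2\,\mathrm{var}\{\tilde{h}_{\rm{d}}(t)\}$ and $\varrho_{\rm{r}}^2\varrho_{\rm{g}}^2\,\mathrm{var}\{\sum_n \tilde{h}^{n}_{\rm{r}}(t)e^{j\theta_n(t)}\tilde{h}^{n}_{\rm{g}}(t)\}$, omitting the variances of $\sum_n \rho_{\rm{r}}\varrho_{\rm{g}}\bar{h}^{n}_{\rm{r}}(t)e^{j\theta_n(t)}\tilde{h}^{n}_{\rm{g}}(t)$ and $\sum_n \varrho_{\rm{r}}\rho_{\rm{g}}\tilde{h}^{n}_{\rm{r}}(t)e^{j\theta_n(t)}\bar{h}^{n}_{\rm{g}}$. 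These are exactly the terms you computed, $\rho_{\rm{r}}^2\varrho_{\rm{g}}^2\left(d_{\rm{r}}^{n}(t)\right)^{-\varepsilon_{\rm{r}}}\left(d_{\rm{NLoS,g}}^{n}\right)^{-\varepsilon'_{\rm{g}}}$ and $\varrho_{\rm{r}}^2\rho_{\rm{g}}^2\left(d_{\rm{NLoS,r}}^{n}(t)\right)^{-\varepsilon'_{\rm{r}}}\left(d_{\rm{g}}^{n}\right)^{-\varepsilon_{\rm{g}}}$, and they are manifestly nonzero (deterministic factor times a zero-mean Gaussian). The modeling assumption you were searching for to justify discarding them does not exist in the paper; indeed, since $\rho_{(\cdot)}\geq\varrho_{(\cdot)}$ for K-factors above $0$~dB, the dropped cross terms generally dominate the retained $\varrho_{\rm{r}}^2\varrho_{\rm{g}}^2$ term, so they cannot be waved away as higher-order. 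The theorem's stated $\sigma_h^2(t)$ therefore holds only modulo this unstated approximation, and your proposal is the mathematically sound account of what an honest proof would require.
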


\begin{proof}
See Appendix A.
\end{proof}

According to Theorem~\ref{theorem1}, the coverage probability can be derived as in Theorem~\ref{theorem2}.
\begin{theorem}
  \label{theorem2}
  $P_{\rm{out}}\left( t \right)$ follows a non-central chi-square distribution, i.e., $\chi ^2 \left(\nu,\zeta \left( t \right)\right)  $, with the degree freedom $\nu = 1$ , and the non-centrality parameter $\zeta\left( t \right) = \frac{\mu_h^2\left(t\right)}{\sigma_{h}^2\left(t\right)}$, where
\begin{equation}
  \begin{array}{l}\left| \mu_h\left( t \right) \right|^2= \left|\rho _{\rm{d} }\sqrt{D_{\rm{d}}^{-\varepsilon_{\rm{d}}}\left( t \right)}e^{-j\theta _{ \rm{d}}\left(t\right)} \right. \\ \left.
  +\sum_{n=1}^N\rho_{\rm{r}}\rho _{\rm{g} }\sqrt{\left(d_{\rm{r}}^{n} \left( t \right)\right )^{-\varepsilon_{\rm{r}}}}\sqrt{\left(d_{\rm{g}}^{n}\right )^{-\varepsilon_{\rm{g}}}}e^{j\left( \theta _n\left( {t} \right) -\theta _{\rm{r}}^{n}\left( {t} \right) -\theta _{\rm{g}}^{n} \right)} \right|^2.
\end{array}
\end{equation}

With the corresponding cumulative distribution function (CDF), $P_{\rm{out}}\left( t \right)$ is given by
\begin{equation}
  P_{\rm{out}}\left( t \right) = 1 - Q_{\frac{1}{2}}\left(\sqrt{\zeta\left( t \right)}, \sqrt{\gamma_0\left( t \right)}\right),
\end{equation}
where $\gamma_0 = \frac{\gamma_{th}}{\bar{\gamma}\sigma _{h}^{2}\left( t \right)}$, and $Q_m \left(a,b \right)$ is the Marcum Q-function defined in \cite{t25}. Thus, the coverage probability can be expressed as 
\begin{align}
  P_{\rm{cov}} = 1 - P_{\rm{out}} = Q_{\frac{1}{2}}\left(\sqrt{\zeta\left( t \right)}, \sqrt{\gamma_0\left( t \right)}\right).
\end{align}
\end{theorem}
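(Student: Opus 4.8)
The plan is to invoke Theorem~\ref{theorem1}, which establishes that $h(t)\sim\mathcal{C}\mathcal{N}(\mu_h(t),\sigma_h^2(t))$, and then to identify the law of $|h(t)|^2$ after normalizing out the channel variance. First I would rescale the equivalent channel by setting $\hat{h}(t)=h(t)/\sigma_h(t)$, so that the defining event of $P_{\rm{out}}(t)$ becomes $\{\,|\hat{h}(t)|^2<\gamma_0(t)\,\}$ with $\gamma_0(t)=\gamma_{th}/(\bar{\gamma}\sigma_h^2(t))$. This isolates the threshold in exactly the normalized form that will appear as the second argument of the Marcum Q-function and removes $\sigma_h^2(t)$ from the inequality.

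Next I would split the normalized channel into real and imaginary parts, $\hat{h}(t)=\hat{h}_R(t)+j\hat{h}_I(t)$. Because $h(t)$ is complex Gaussian, $\hat{h}_R(t)$ and $\hat{h}_I(t)$ are independent real Gaussian variables whose means are the real and imaginary parts of $\mu_h(t)/\sigma_h(t)$ and whose variances are equal. Then $|\hat{h}(t)|^2=\hat{h}_R^2(t)+\hat{h}_I^2(t)$ is a sum of squared non-zero-mean Gaussians, which by definition is non-central chi-square distributed; the squared mean aggregates into the non-centrality parameter $\zeta(t)=|\mu_h(t)|^2/\sigma_h^2(t)$, and the explicit expression for $|\mu_h(t)|^2$ follows by substituting the mean from Theorem~\ref{theorem1} and expanding the magnitude.

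Finally I would apply the standard identity relating the CDF of a non-central chi-square variable to the Marcum Q-function, namely $\mathrm{Pr}\bigl(\chi^2(\nu,\zeta)<x\bigr)=1-Q_{\nu/2}\bigl(\sqrt{\zeta},\sqrt{x}\bigr)$, and evaluate it at $x=\gamma_0(t)$ to obtain $P_{\rm{out}}(t)=1-Q_{1/2}\bigl(\sqrt{\zeta(t)},\sqrt{\gamma_0(t)}\bigr)$; the coverage probability then follows immediately from~\eqref{eq:pcov} as the complement $P_{\rm{cov}}=Q_{1/2}\bigl(\sqrt{\zeta(t)},\sqrt{\gamma_0(t)}\bigr)$. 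The step I expect to be the main obstacle is the bookkeeping in this normalization: one must track the variance convention of the complex Gaussian with care so that the factors of $\sigma_h^2(t)$ distribute correctly between $\zeta(t)$ and $\gamma_0(t)$, and so that the order of the Marcum Q-function is reconciled with the single degree of freedom $\nu=1$ asserted in the statement rather than the two real dimensions produced by decomposing into real and imaginary parts.
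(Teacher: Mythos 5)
Your route is essentially the paper's own (Appendix B): invoke Theorem~\ref{theorem1}, normalize the channel power by $\sigma_h^2(t)$ so the outage event reads $\{|h(t)|^2/\sigma_h^2(t)<\gamma_0(t)\}$, identify the normalized power as non-central chi-square, and convert its CDF into a Marcum $Q$-function. The paper does exactly this, only more tersely --- it asserts $|h(t)|^2/\sigma_h^2(t)\sim\chi^2(1,\zeta(t))$ by fiat and immediately writes down the two displayed formulas, with no real/imaginary decomposition at all.

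However, the step you flag as ``the main obstacle'' is not bookkeeping; it is a genuine gap, and carried out faithfully your decomposition contradicts the statement rather than proving it. For a circularly symmetric complex Gaussian $h(t)\sim\mathcal{C}\mathcal{N}\left(\mu_h(t),\sigma_h^2(t)\right)$, the real and imaginary parts are independent with variance $\sigma_h^2(t)/2$ \emph{each}, so the correctly normalized power is $|h(t)|^2/\bigl(\sigma_h^2(t)/2\bigr)\sim\chi^2\bigl(2,\,2|\mu_h(t)|^2/\sigma_h^2(t)\bigr)$: \emph{two} degrees of freedom, non-centrality $2\zeta(t)$, and threshold $2\gamma_0(t)$. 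Feeding this into the identity $\mathrm{Pr}\bigl(\chi^2(\nu,\lambda)<x\bigr)=1-Q_{\nu/2}\bigl(\sqrt{\lambda},\sqrt{x}\bigr)$ gives $P_{\rm{out}}(t)=1-Q_1\bigl(\sqrt{2\zeta(t)},\sqrt{2\gamma_0(t)}\bigr)$ --- the standard Rician outage formula --- not the claimed $1-Q_{\frac{1}{2}}\bigl(\sqrt{\zeta(t)},\sqrt{\gamma_0(t)}\bigr)$. The order-$\frac{1}{2}$, $\nu=1$ version would be valid only if $h(t)$ were a \emph{real} Gaussian scalar, which contradicts Theorem~\ref{theorem1}. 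So there is no variance convention under which your two-dimensional decomposition reconciles with $\nu=1$; the paper avoids the inconsistency only by never performing the decomposition you (correctly) insist on. To complete a correct proof you should keep your argument exactly as planned and instead amend the conclusion to $\nu=2$, Marcum order $1$, and arguments $\sqrt{2\zeta(t)}$, $\sqrt{2\gamma_0(t)}$.
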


\begin{proof}
  See Appendix B.
\end{proof}

\subsection{RIS Phase Optimization}
The set of discrete phase $\varTheta $ contains a series of discrete variables, and the range available for each phase depends on RIS quantization bits. Considering the complexity and validity, we exploit the local search method as shown in Algorithm~\ref{phase} to optimize the phase. Specifically, keeping the other $N-1$ phase values fixed, for each element $\theta_n\left( t \right)$, we traverse all possible values and choose the optimal one. Then, use this optimal solution $\theta _n^\ast\left( t \right)$ as the new value of $\theta _n\left( t \right)$  for the optimization of another phase, until all phases in the set $\varTheta $ are fully optimized. 
\begin{algorithm}[!t]
  \caption{Local Search for Discrete phase}
  \label{phase}
  \begin{algorithmic}[1]
  \REQUIRE
  the toal time $T$ slots, the number of quantization bits $b$
  \ENSURE
    {$\theta _n^\ast\left( t \right),~\forall n \in \mathcal{N}$ }
  \FOR {$t=1:T$}
  \FOR {$n=1:N$}
  \STATE Assign all possible values to $\theta _n\left( t \right)$, and select the value maximizing the coverage probability $P_{\rm{cov}} \left( t \right)$ denoted as $\theta _n^\ast\left( t \right)$;
  \STATE $\theta _n\left( t \right)=\theta _n^\ast\left( t \right)$;
  \ENDFOR
  \ENDFOR
  \end{algorithmic} 
\end{algorithm}

\section{Numerical Analysis}
In this section, we analyze coverage probability of RIS-assisted HST communications. Simulation results are provided to validate system performance in terms of coverage probability. For comparison, the scheme without using RIS is considered, which does not use RIS for signal reflection and MR can only receive signals through BS-MR channels. The simulation parameters are set as listed in Table~\ref{tab:1}. 
\begin{table}[!htbp]   
  \caption{Simulation Parameters}
  \begin{center}\label{tab:1}   
  \begin{tabular}{|c|c|c|}  \hline
    \textbf{Parameter}  & \textbf{Symbol}    & \textbf{Value} \\
    \hline
     {Speed of HST}        & $v$                & {$360$~km/h }    \\  
    \hline
    \ {Height of BS}        & $H_{\rm{BS}}$      &  $10$~m \\
    \hline
     {Height of RIS}       & $H_{\rm{RIS}}$     &  $2$~m \\
    \hline
     {Height of MR}        & $H_{\rm{MR}}$      &  $2.5$~m \\
    \hline
    Bandwidth    & $B$                & $20$~MHz \\
    \hline
    Noise power        & $\sigma^2$        & $-174$~dBm/Hz $+10\log _{10}B+10$~dB \\
    \hline
    Carrier frequency  & $f$               & $2.35$~GHz \\
    \hline
    Rician factor      & $\kappa _{\left( {\rm{d,r,g}}\right)}$ & $10$~dB \\
    \hline
  \end{tabular}
\end{center}
\end{table}

Fi{}g.~{\ref{fig1}} illustrates coverage probability $P_{\rm{cov}}$ against transmission power under two schemes and varying numbers of RIS elements. It can be observed that coverage probabilities under these two schemes increase with increasing of the transmission power. Coverage probability of using RIS outperforms the case without RIS. It can be further observed that as the number of RIS elements $N$ increases, $P_{\rm{cov}}$ increases. This is because more RIS elements in the system result that more signal paths and energy can be reflected to enhance the signal quality at MR.
\begin{figure}[!t]
\centering
\includegraphics[scale=0.45]{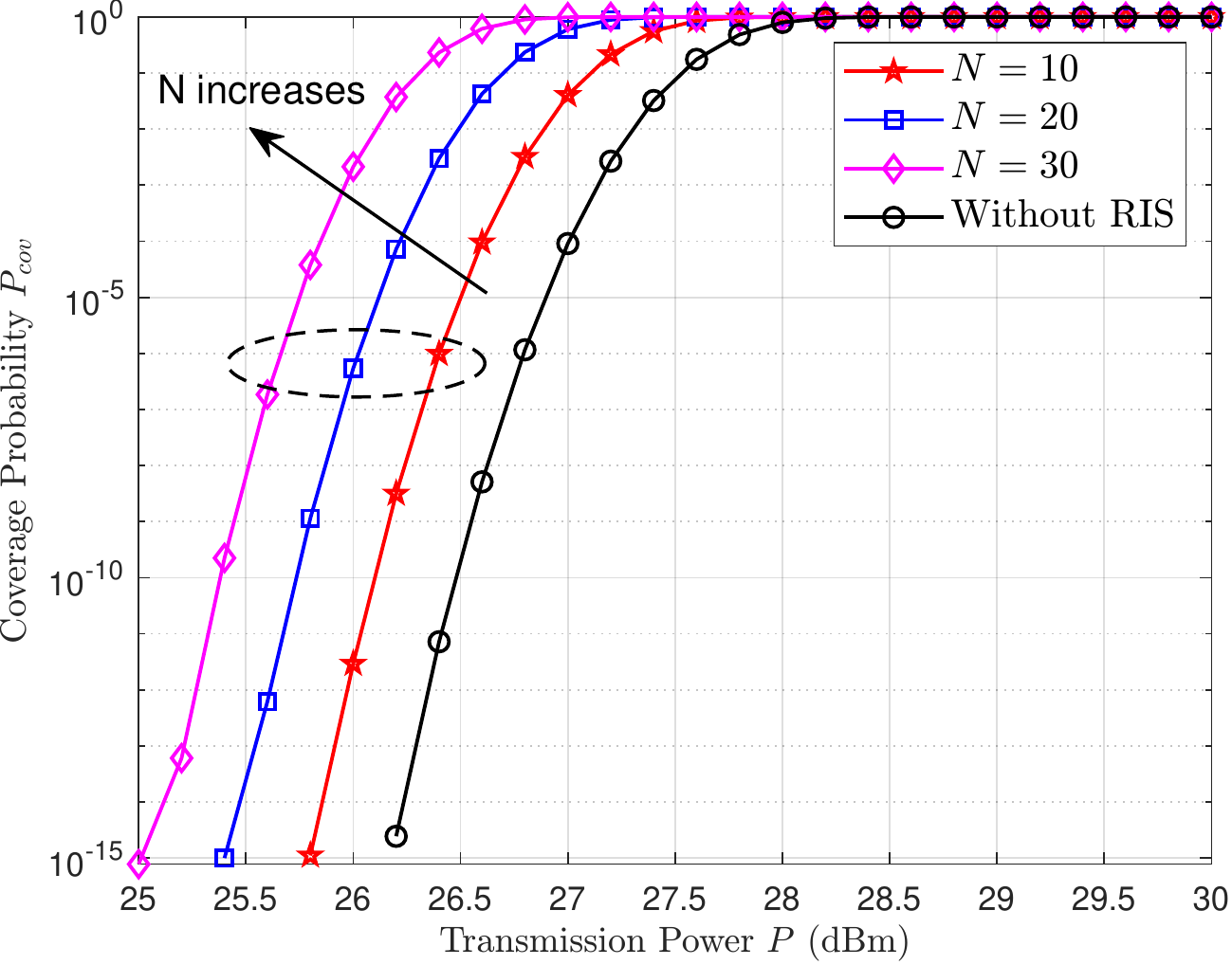}\\
\caption{\label{fig1} Coverage probability vs. transmission power.}
\end{figure}
\begin{figure}[!t]
  \centering
  \includegraphics[scale=0.45]{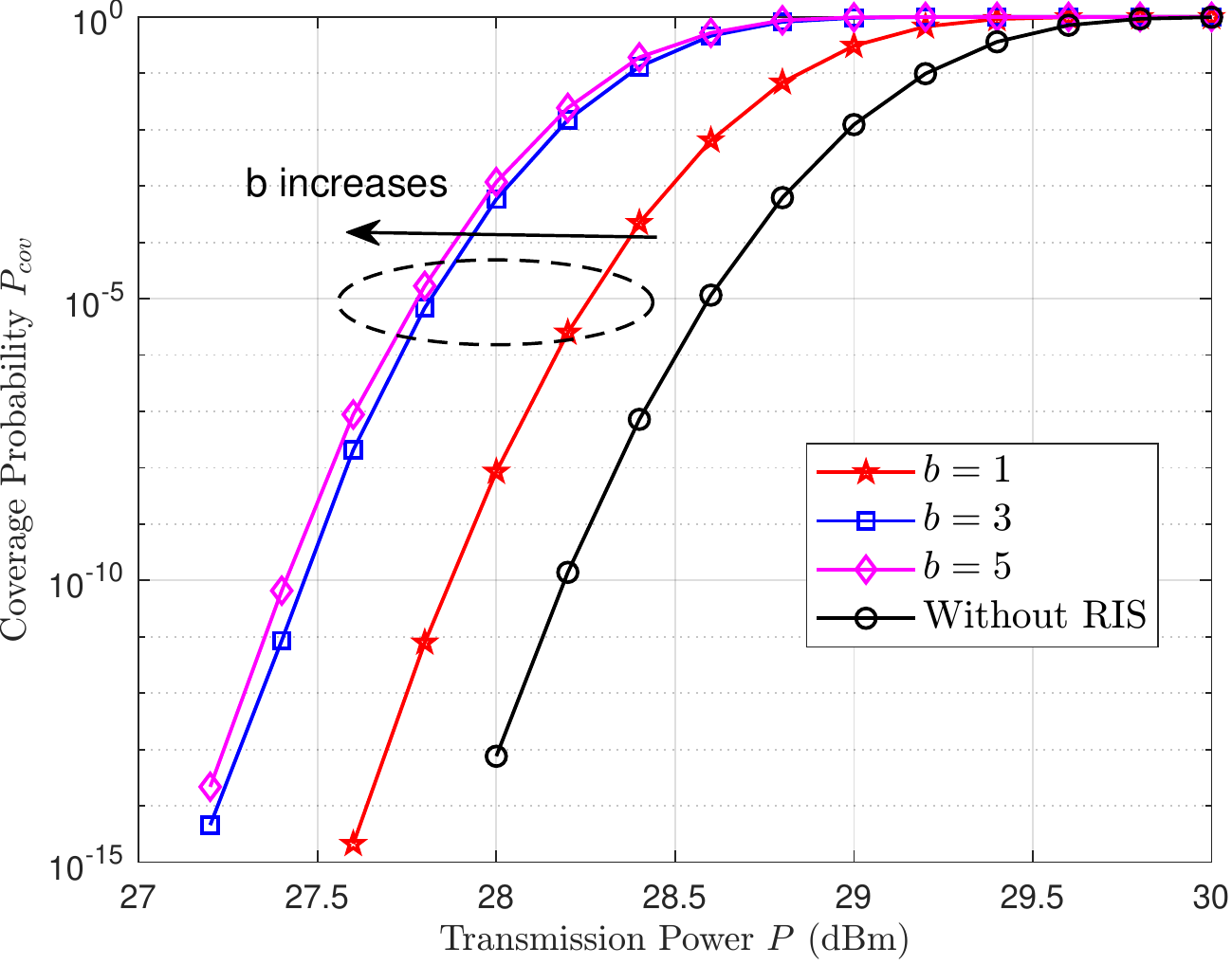}\\
  \caption{\label{fig2} Coverage probability vs. transmission power.}
\end{figure}

Fi{}g.~{\ref{fig2}} shows coverage probability $P_{\rm{cov}}$ against transmission power under different schemes and varying numbers of RIS quantization bits. It can be observed that $P_{\rm{cov}}$ increases with increasing number of RIS quantization bits $b$. It can be observed that coverage probability of $b=1$ is significantly lower than $b=3$ and $b=5$, which means that coverage probability is significantly improved when $b>1$. This is because phase resolution increases with increasing $b$, which results that  the received power at MR is enhanced. It can be further observed that the narrow coverage probability gap exists between $b=3$ and $b=5$, which indicates that coverage probability may remain unchanged with further increasing $b$.

Fi{}g.~{\ref{fig4}} illustrates coverage probability $P_{\rm{cov}}$ versus the SNR threshold $\gamma_{th}$ under different numbers of RIS elements. It can be observed that coverage probabilities under these two schemes decrease with increasing  transmission power. Obviously, improving SNR threshold requirement means more outage events and lower coverage probability. When $N$ increases, channel gain of $h\left( t\right)$ increases, which result that the received power at MR is enhanced and the coverage probability is thus improved.
\begin{figure}[!t]
    \centering
    \includegraphics[scale=0.45]{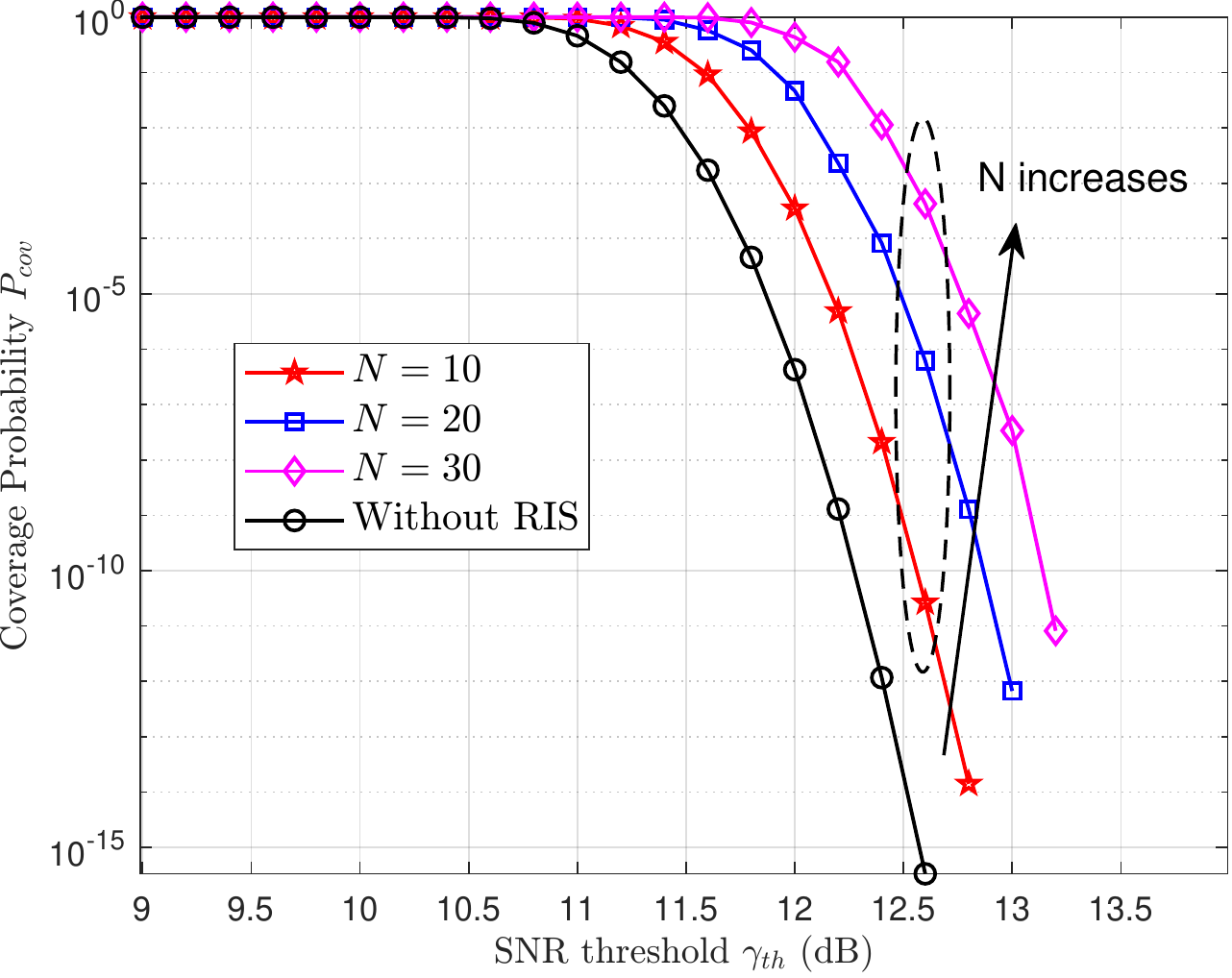}\\
    \caption{\label{fig4} Coverage probability vs. SNR threshold.}
\end{figure}
\begin{figure}[!t]
  \centering
  \includegraphics[scale=0.451]{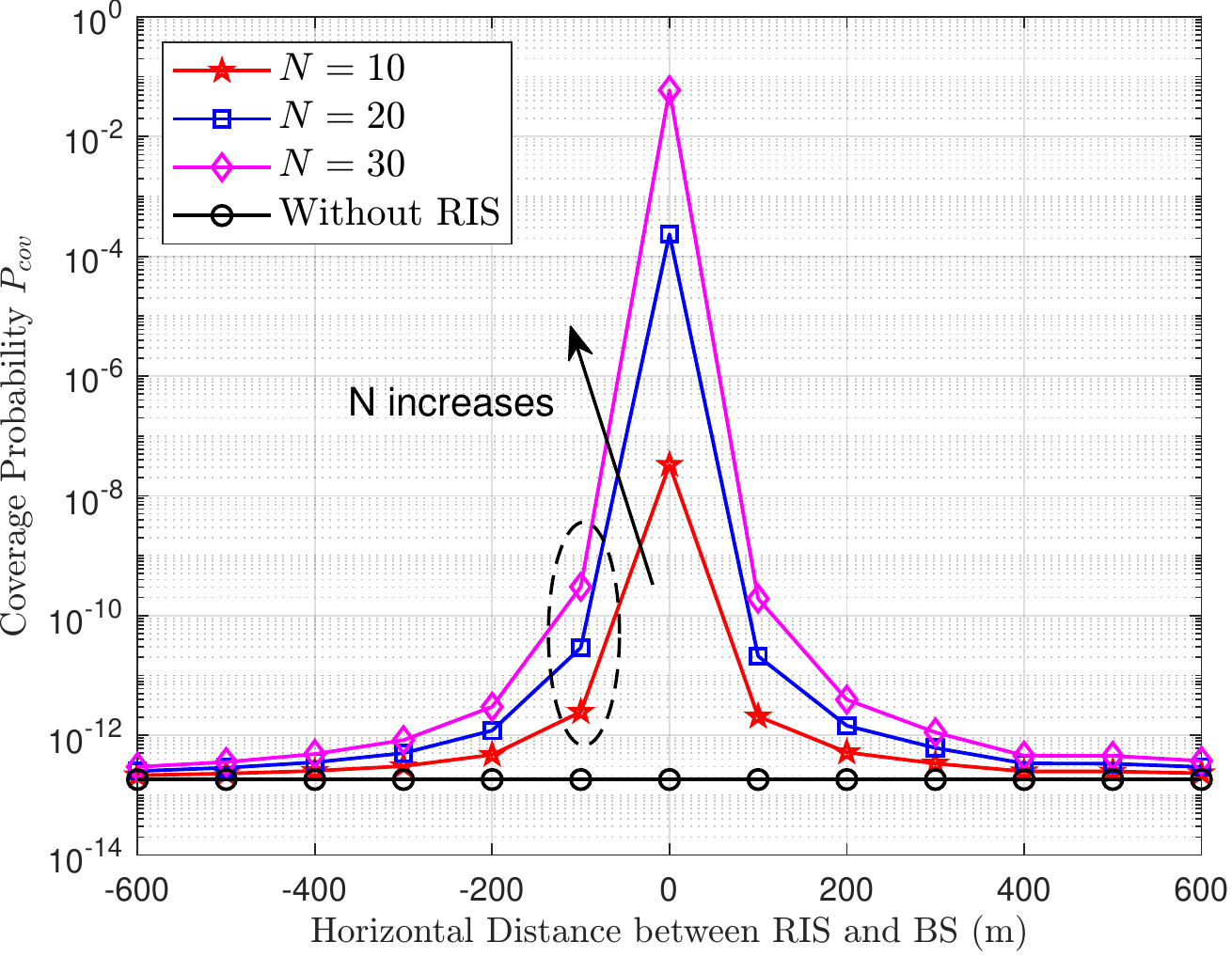}\\
  \caption{\label{fig5}Coverage probability vs. horizontal distance between BS and RIS.}
\end{figure}

Fi{}g.~{\ref{fig5}} shows impact of horizontal distances $d_{\rm{BR}}^{\rm{h}}$ between RIS and BS, as shown in  F{}ig.~\ref{fig:top}, on coverage probability. It is observed that the coverage probability of the cases using RIS firstly increases and then decreases, while the case with using RIS dose not change with location. It is found that when horizontal distance $d_{\rm{BR}}^{\rm{h}}$ is equal to $-600$ m or $600$ m, i.e., at the edge of the investigated area, the coverage probability is lowest. While horizontal distance $d_{\rm{BR}}^{\rm{h}}$ is equal to $0$, which is the center of the long axis of BS elliptical coverage area, the coverage probability is high. This is owing to the fact that when RIS moves towards the center of the area, path loss of reflection link changes accordingly, which leads to an enhancement of the reflected signal and the benefits of RIS are fully utilized. This show that system performance is sensitive to placement of RIS.

\section{Conclusion}
In this paper, we investigate wireless coverage probability analysis of downlink SISO RIS-assisted HST communication system. An closed-form expression of the coverage probability is derived. We have analyzed the impact of different system parameters on coverage probability including transmission power, SNR threshold, and horizontal distance between BS and RIS. Numerical results have demonstrated that better coverage performance can be achieved by using well RIS. The results in this paper can serve as a guidance for RIS-assisted HST communication coverage analysis.

\section*{Appendix A}
Substituting \eqref{eq:hbm}, \eqref{eq:hbr} and \eqref{eq:hrm} into $h\left(t\right)$, we have
\begin{align}\label{eq:h2}
    &  h\left( t \right) = \rho _{\rm{d}}\bar{h}_{ \rm{d}}\left( t \right) + \varrho _{{\rm{d}}}\tilde{h}_{ \rm{d}}\left( t \right) + \sum_{n=1}^N{\rho _{\rm{r}}\rho _{\rm{g}}\bar{h}^{n}_{{\rm{r}} }\left( t \right) e^{j\theta _n\left( t \right)}\bar{h}^{n}_{{\rm{g}} }} \\ \nonumber
       &+ \sum_{n=1}^N{\rho _{\rm{r}}\varrho _{{\rm{g}}}\bar{h}^{n}_{{\rm{r}} }\left( t \right) e^{j\theta _n\left( t \right)}\tilde{h}^{n}_{{\rm{g}} }}\left( t \right) + \sum_{n=1}^N{\varrho _{\rm{r}}\rho _{{\rm{g}}}\tilde{h}^{n}_{{\rm{r}} }\left( t \right) e^{j\theta _n\left( t \right)}\bar{h}^{n}_{{\rm{g}} }} \\ \nonumber
      &+ \sum_{n=1}^N{\varrho _{\rm{r}}\varrho _{{\rm{g}}}\tilde{h}^{n}_{{\rm{r}} }\left( t \right) e^{j\theta _n\left( t \right)}\tilde{h}^{n}_{{\rm{g}} }}\left( t \right),
      \nonumber 
\end{align}
where $\rho_{\rm{d}}=\sqrt{\frac{\kappa _{{\rm{d}}}}{\kappa _{{\rm{d}}}+1}}$, $\varrho _{{\rm{d}}}=\sqrt{\frac{1}{\kappa _{\rm{d}}+1}}$, $\rho _{\rm{g}}=\sqrt{\frac{\kappa _{{\rm{g}}}}{\kappa _{{\rm{g}}}+1}}$, $\varrho _{{\rm{g}}}=\sqrt{\frac{1}{\kappa _{{\rm{g}}}+1}}$, $\rho _{\rm{r}}=\sqrt{\frac{\kappa _{\rm{r}}}{\kappa _{\rm{r}}+1}}$, and $\varrho _{{\rm{r}}}=\sqrt{\frac{1}{\kappa _{\rm{r}}+1}}$.
Note that, the LoS components of  BS-MR link,  RIS-MR link and  BS-RIS link depend on the corresponding link distances. For a given location, the components $\rho _{\rm{d}}\bar{h}_{ \rm{d}}\left( t \right)$ and  $\sum_{n=1}^N{\rho _{\rm{r}}\rho _{\rm{g}}\bar{h}^{n}_{{\rm{r}} }\left( t \right) e^{j\theta _n\left( t \right)}\bar{h}^{n}_{{\rm{g}}}}$ of \eqref{eq:h2} turn to be deterministic. Since the NLoS component $\tilde{h}_{ \rm{d}}\left( t \right)$ follows complex Gaussian distribution with zero mean and variance $d_{\rm{NLoS,d}}^{-\varepsilon'_{\rm{d}}}\left( t \right)$, $\tilde{h}_{ \rm{r}}\left( t \right)$ and $\tilde{h}_{ \rm{g}}\left( t \right)$ follow complex Gaussian distribution with zero mean and variance $\left(d_{\rm{NLoS,r}}^{n}\left( t \right)\right) ^{-\varepsilon'_{\rm{r}}}$, $\left(d_{\rm{NLoS,g}}^{n}\right) ^{-\varepsilon'_{\rm{g}}}$, respectively, and the parts $\varrho _{{\rm{d}}}\tilde{h}_{ \rm{d}}\left( t \right)$, $\sum_{n=1}^N{\rho _{\rm{r}}\varrho _{{\rm{g}}}\bar{h}^{n}_{{\rm{r}} }\left( t \right) e^{j\theta _n\left( t \right)}\tilde{h}^{n}_{{\rm{g}} }}\left( t \right)$, $\sum_{n=1}^N{\varrho _{\rm{r}}\rho _{{\rm{g}}}\tilde{h}^{n}_{{\rm{r}} }\left( t \right) e^{j\theta _n\left( t \right)}\bar{h}^{n}_{{\rm{g}}}}$ and $\sum_{n=1}^N{\varrho _{\rm{r}}\varrho _{{\rm{g}}}\tilde{h}^{n}_{{\rm{r}} }\left( t \right) e^{j\theta _n\left( t \right)}\tilde{h}^{n}_{{\rm{g}} }}\left( t \right)$ of \eqref{eq:h2} also follow a Gaussian distribution.

The expectation of $h\left( t \right)$ can be written by 
\begin{align} \label{eq:eh2}
  &\mu_h \left(t\right) \triangleq \mathbb{E}\left\{ h\left( t \right)\right\} \\ \nonumber
&= \rho _{\rm{d}}\bar{h}_{ \rm{d}}\left( t \right) +\sum_{n=1}^N{\rho _{\rm{r}}\rho _{\rm{g}}\bar{h}^{n}_{{\rm{r}} }\left( t \right) e^{j\theta _n\left( t \right)}\bar{h}^{n}_{{\rm{g}} }} \\  \nonumber
&=\rho _{\rm{d} }\sqrt{D_{\rm{d}}^{-\varepsilon_{\rm{d}}}\left( t \right)}e^{-j\theta ^{ \rm{d}}\left(t\right)} \\ \nonumber
&+\sum_{n=1}^N\rho_{\rm{r} }\rho _{\rm{g} }\sqrt{\left(d_{\rm{r}}^{n} \left( t \right)\right )^{-\varepsilon_{\rm{r}}}}\sqrt{\left(d_{\rm{g}}^{n}\right )^{-\varepsilon_{\rm{g}}}}e^{j\left( \theta _n\left(t\right)-\theta_{\rm{r}}^{n}\left(t\right)-\theta_{\rm{g}}^{n} \right)} \nonumber.
\end{align}

The variance of $h\left( t \right)$  is derived as 
\begin{align} \label{eq:hsigama}
  &\sigma _{h}^{2}\left(t\right) \triangleq \rm{var}\left\{ h\left( t \right)\right\}\\ \nonumber
  &= \varrho _{{\rm{d}}}^2\rm{var}\left\{ \tilde{h}_{ \rm{d}}\left( t \right) \right\} +\varrho _{\rm{r}}^2\varrho _{{\rm{g}}}^2\rm{var}\left\{ \sum_{n=1}^N{\tilde{h}^{n}_{{\rm{r}} }\left( t \right) e^{j\theta _n\left( t \right)}\tilde{h}^{n}_{{\rm{g}} }}\left( t \right) \right\} \\  \nonumber
  &=\varrho_{\rm{d}}^2d_{\rm{NLoS,d}}^{-\varepsilon'_{\rm{d}}}\left( t \right) +\sum_{n=1}^{N}\varrho_{\rm{r}}^2\varrho_{\rm{g}}^2\left(d_{\rm{NLoS,r}}^{n}\left( t \right)\right)^{-\varepsilon'_{\rm{r}}}\left(d_{\rm{NLoS,g}}^{n}\right)^{-\varepsilon'_{\rm{g}}}. \nonumber
\end{align}

Therefore, $h\left( t \right)$ is proved to follow a complex-valued Gaussian distribution, $h\left(t\right) \sim \mathcal{C} \mathcal{N} \left( \mu_h\left(t\right),\sigma_{h}^2\left(t\right) \right)$. This completes the proof.

\section*{Appendix B}
Due to the Gaussian channel proved above, $h\left(t\right) \sim \mathcal{C} \mathcal{N} \left( \mu_h\left(t\right),\sigma_{h}^2\left(t\right) \right) $. Therefore, $\frac{  \left| h\left(t\right) \right|^2 }{\sigma_{h}^2\left(t\right)}$ follows the non-central chi-squared distribution, i.e.,  $\chi ^2(\nu,\zeta \left( t \right)) $, with the degrees of freedom $\nu = 1$, and the non-centrality parameter is in \eqref{eq:notcenter}.
\begin{figure*}[!t]
\begin{align}
 \label{eq:notcenter}
   \zeta \left( t \right) = \frac{ \left| \mu_h\left(t\right)\right|^2 }{\sigma_{h}^2\left(t\right)} 
  = \frac{\left|\rho _{\rm{d} }\sqrt{D_{\rm{d}}^{-\varepsilon_{\rm{d}}}\left( t \right)}e^{-j\theta ^{ \rm{d}}\left(t\right)}+\sum_{n=1}^N\rho_{\rm{r} }\rho _{\rm{g} }\sqrt{\left(d_{\rm{r}}^{n} \left( t \right)\right )^{-\varepsilon_{\rm{r}}}}\sqrt{\left(d_{\rm{g}}^{n}\right )^{-\varepsilon_{\rm{g}}}}e^{j\left( \theta _n\left(t\right)-\theta_{\rm{r}}^{n}\left(t\right)-\theta_{\rm{g}}^{n} \right)}  \right|^2}{ \varrho_{\rm{d}}^2d_{\rm{NLoS,d}}^{-\varepsilon'_{\rm{d}}}\left( t \right) +\sum_{n=1}^{N}\varrho_{\rm{r}}^2\varrho_{\rm{g}}^2\left(d_{\rm{NLoS,r}}^{n}\left( t \right)\right)^{-\varepsilon'_{\rm{r}}}\left(d_{\rm{NLoS,g}}^{n}\right)^{-\varepsilon'_{\rm{g}}}}.
\end{align}
\hrulefill
\end{figure*}

With the corresponding CDF of $\chi ^2(\nu,\zeta \left( t \right)) $, $P_{\rm{out}}\left( t \right)$ defined in \eqref{eq:pcov} is given by
\begin{equation}
  P_{\rm{out}}\left( t \right) = 1 - Q_{\frac{1}{2}}\left(\sqrt{\zeta\left( t \right)}, \sqrt{\gamma_0\left( t \right)}\right),
\end{equation}
where $\gamma_0 = \frac{\gamma_{th}}{\bar{\gamma}\sigma _{h}^{2}\left( t \right)}$, and $Q_m \left(a,b \right)$ is the Marcum Q-function defined in \cite{t25}. As a result, the coverage probability defined in \eqref{eq:pcov} can be rewrite as 
\begin{align}
  P_{\rm{cov}} = Q_{\frac{1}{2}}\left(\sqrt{\zeta\left( t \right)}, \sqrt{\gamma_0\left( t \right)}\right).
\end{align}
This completes the proof.

\end{document}